\let\origproof\proof
\let\proof\origproof
\newcommand{\qed}{\hfill$\Box$}
\newcommand{\false}{false}
\newcommand{\defs}{\mathbin{\widehat{=}}}
\newcommand{\setlog}{$\{log\}$\xspace}
\newcommand{\SET}{\{\cdot\}}
\newcommand{\SATSET}{\mathit{SAT}_{\SET}}
\newcommand{\e}{\varnothing}
\renewcommand{\plus}{\mathbin{\scriptstyle\sqcup}}
\newcommand{\Disj}{disj}
\newcommand{\Size}{size}
\renewcommand{\Cup}{un}
\newcommand{\Ncup}{nun}
\newcommand{\Id}{id}
\newcommand{\Inv}{inv}
\newcommand{\Comp}{comp}
\newcommand{\TYPES}{\mathsf{Types}_{\SET}}  
\newcommand{\Dec}{dec}
\newcommand{\Etype}{\mathsf{enum}}
\newcommand{\Utype}{\mathsf{sum}}
\newcommand{\Stype}{\mathsf{set}}
\newcommand{\Rtype}{\mathsf{rel}}
\newcommand{\Itype}{\mathsf{int}}
\newcommand{\Ptype}{\mathsf{prod}}
\newcommand{\qm}{?}
\newcommand{\Ut}{\mathcal{B}}
\newcommand{\At}{\mathcal{A}}
\newcommand{\ac}[1]{\textbf{#1}}
\newcommand{\ut}[1]{\textsf{#1}}
\newcommand{\uc}[2]{\ut{#1}?\ac{#2}}
\newcommand{\tjud}[1]{\Gamma\vdash #1}
\newcommand{\gtjud}[2]{\{#1\} \cup \Gamma\vdash #2}
\newtheorem{example}{Example}
\newtheorem{theorem}{Theorem}
\title[Combining Type Checking and Set Constraint Solving]
{Combining Type Checking and Set Constraint Solving to Improve Automated Software Verification}
\author[M. Cristi\'a and G. Rossi]
{MAXIMILIANO CRISTI\'A \\
Universidad Nacional de Rosario and CIFASIS \\
Argentina    \\
E-mail: cristia@cifasis-conicet.gov.ar
\and
GIANFRANCO ROSSI \\
Universit\`a di Parma \\
Italy    \\
E-mail: gianfranco.rossi@unipr.it
}
\begin{document}
\label{firstpage}

\maketitle

\begin{abstract}
This technical note shows how we have combined prescriptive type
checking and constraint solving to increase automation during
software verification.
We do so by defining a type system and implementing a
typechecker for \setlog (read `setlog'), a Constraint Logic
Programming (CLP) language and satisfiability solver based on set theory. The
constraint solver is proved to be safe w.r.t. the type system.
Two industrial-strength case studies are
presented where this combination is used with very good results.

Under consideration in Theory and Practice of Logic Programming (TPLP).\end{abstract}

\begin{keywords}
\setlog, set theory, constraint logic programming, type system, typechecker
\end{keywords}

\section{Introduction}

CLP systems can be used for formal software verification. In general, these systems provide untyped
languages. Adding \emph{prescriptive} type systems to CLP languages may help when CLP is used for formal verification because
typecheckers are sound and complete.

Prescriptive type checking catch errors at compile time,
whereas constraint solving can catch errors at runtime. Catching errors at
runtime might not be acceptable in certain contexts---e.g., safety-critical
systems. In this note we show how prescritive type checking and set constraint
solving can be combined in CLP to automatically found errors before the program
is executed. We do so by defining a type system and implementing a typechecker
for \setlog (`setlog') \cite{Dovier00,setlog}.

\setlog is a satisfiability solver implementing decision procedures for several
expressive fragments of the theory of finite sets and finite set relation
algebra.
Several in-depth empirical evaluations provide evidence that \setlog
is able to solve non-trivial problems (\citeANP{DBLP:journals/jar/CristiaR20}
\citeyearNP{DBLP:conf/RelMiCS/CristiaR18,DBLP:journals/jar/CristiaR20,DBLP:journals/jar/CristiaR21a,Cristia2024};
\citeNP{CristiaRossiSEFM13}); in particular as an automated verifier of
security properties (\citeANP{DBLP:journals/jar/CristiaR21}
\citeyearNP{DBLP:journals/jar/CristiaR21,DBLP:journals/jar/CristiaR21b}) \cite{DBLP:journals/jar/CristiaLL23}.
That is, \setlog is able to automatically prove that a program or a model verifies some
complex properties such as the invariance lemmas arising from the verification of non-trivial state machines.
Rooted in CLP and Prolog, \setlog essentially provides an
untyped language. The lack of
types makes it impossible for \setlog to find certain classes of errors.

In this note, we show how \setlog has been enriched with a (prescriptive) type
system. \setlog's type system is an instance of a Hindley-Milner system
including parametric polymorphism for function and predicate symbols. Given
that in \setlog finite sets and set operators are first-class entities of the
language, the type system is defined as to provide support for them. That is,
there are types for sets, binary relations and their elements, and set and
relational operators are typed accordingly. The type system is based on the
type system defined for the Z formal notation \cite{Spivey00}, which in turn is
similar to B's \cite{Abrial00}. Actually, \setlog has been proposed as a
prototyping language for B and Z specifications
\cite{CristiaRossiSEFM13,DBLP:journals/jar/CristiaR21b,Cristia2024}.

Through the combination between type  checking and
CLP features \setlog performs three important tasks: it type checks a program,
it runs the program, and it automatically proves properties of the same
program. As far as we understand this is a novel approach concerning tools for
set-based languages. There are tools where types are used to guarantee some
properties (e.g., Atelier-B \cite{Mentre00}, Rodin
\cite{DBLP:journals/sttt/AbrialBHHMV10} and Zenon Modulo
\cite{DBLP:conf/lpar/BonichonDD07}) but they do not enjoy CLP properties---for
instance, they cannot execute their models. Other tools clearly fit in the CLP
paradigm (e.g., ProB \cite{Leuschel00}), but they cannot prove properties true
of the specification. None of these tools implement decision procedures for set
theory as those implemented in \setlog. At the same time, \setlog provides a
language almost at the same level of abstraction of formal notations such as B and Z.
These features make \setlog to solve real-world problems as shown in Section \ref{typesCLP}.

As a design choice, we allow the typechecker in \setlog to be switched on and
off by users; when switched off, \setlog works as usual. This choice is not
only a matter of backward compatibility but the result of understanding that
typed and untyped formalisms have their own advantages and disadvantages
\cite{DBLP:journals/toplas/LamportP99}.  Hence, being \setlog at the intersection of a specification language, a
programming language and an automated verifier, we think that users can decide
when they need types.

The contents of this note are rather informal and conceptual. All the technical details can be found in an on-line document \cite{DBLP:journals/corr/abs-2205-01713}.

\section{\label{motiv}Combining type checking and constraint solving}

As we have said, \setlog 
provides an untyped language. For example, if variable \verb+X+
represents the current value of a traffic light it should always be the case that \verb+X in {red,yellow,green}+ holds (where \verb+in+ is interpreted as $\in$). 
However, in a \setlog program it is possible
for \verb+X+ to be bound to any value, say \verb+2+. Then, \verb+2 in {red,yellow,green}+ will fail making \setlog to answer \verb+false+ at
runtime---which is potentially dangerous in a safety-critical system. \setlog will also answer \verb+false+ if \verb+X+ is bound to \verb+green+ when \verb+X in {red,yellow}+ holds. Hence, we have two
\verb+false+ answers arising from quite different causes: the first one (\verb+X = 2+)
is clearly a programming error; the second one (\verb+X = green+) is just a
possible, although unsatisfiable, situation. We would like to help the
programmer to easily tell these \verb+false+ answers apart. A (prescriptive)
typechecker can signal the user with a type error if there is the chance of \verb+X+
being bound to something outside of \verb+{red,yellow,green}+;
the constraint solver can determine whether or not \verb+X in {red,yellow}+ is satisfiable by discharging a certain proof obligation. Both
checks are performed before the program is run, thus avoiding runtime errors.

\subsection{\label{library}Typechecking and formally verifying a simple library system}

Consider the following \setlog clause concerning a simple library system: 
\begin{verbatim}
addBook(Books,B,T,Books_) :-
  dom(Books,D) & B nin D & Books_ = {[B,T] / Books}.
\end{verbatim}
where \verb+Books+ is a set of ordered pairs mapping books IDs (BID) onto their titles and \verb+Books_+ is the state of the library after adding
\verb+[B,T]+ to it. In turn, \verb+B+ and \verb+T+ are the book ID and title of a new book being added to the library. Besides, \verb+dom/2+ is a \setlog constraint computing the domain of a function; \verb+nin+ is interpreted as $\notin$; and \verb+{[B,T] / Books}+ is a \setlog set term interpreted as $\{(B,T)\} \cup Books$, where $(B,T)$ is an ordered pair. Finally, \verb+&+ stands for conjunction.

Without types, \verb+T+ can be bound to any term (e.g., an ordered pair or a list) which is clearly not the intention. After adding types to \setlog we can declare the type of \verb+addBook+ as follows:
\begin{verbatim}
dec_p_type(addBook(rel(bid,title),bid,title,rel(bid,title)).
addBook(Books,B,T,Books_) :-
  dom(Books,D) & B nin D & Books_ = {[B,T] / Books}.
\end{verbatim}
That is, the clause is preceded by a \verb+dec_p_type+ fact asserting the type of each
argument\footnote{\texttt{dec\_p\_type} stands for `\textit{dec}lare
\textit{p}redicate \text{type}'.}. As can be seen, \verb+dec_p_type+ takes one argument corresponding to the head
predicate of the clause being declared. In turn, each argument of the
predicate inside \verb+dec_p_type+ is
a type. In this way, \verb+rel(bid,title)+ is the type of
\verb+Books+, \verb+bid+ is the type of \verb+B+, etc.

\verb+rel(bid,title)+ corresponds to the type of the binary relations from \verb+bid+ to \verb+title+, which in turn are \emph{basic types}. Values of basic type \verb+x+ are of the form \verb+x?+$atom$ where $atom$ is any Prolog atom.

When the file containing \verb+addBook/4+ is consulted, the typechecker checks the
type of \verb+addBook/4+.
Later, users can issue queries involving the clauses
declared in the file. In this case, users have to give the types of all the
variables involved in the query. For example:
\begin{verbatim}
dec(NewBooks,rel(bid,title)) & addBook({},bid?b1,title?aleph,NewBooks)
\end{verbatim}
where \verb+dec(NewBooks,rel(bid,title))+ declares \verb+NewBooks+'s type and \verb+{}+ denotes the empty set.
Before executing the query, \setlog calls the typechecker to check whether the
query is type correct or not. In this case, the typechecker uses the
\verb+dec_p_type+ of \verb+addBook/4+, the \verb+dec+ predicates and the arguments in the
query to control that each argument in the query has the right type. 

Types ensure that programs ``are free
from certain kinds of misbehavior'' \cite{DBLP:books/daglib/0005958}. However, types cannot catch all
kinds of errors. Concerning the library system, observe that \verb+Books+ is supposed to be a function but its type declaration states that it is a binary relation. Then, this type declaration cannot
catch the error of putting in \verb+Books+ two ordered pairs sharing the same BID but different titles. 

One may think in a type system where functions are first-class entities. However, in set-based formal notations such as B and Z binary relations are first-class entities whereas functions are defined as a subclass of binary relations. Hence, in this context this problem is approached by
asking the user to discharge \emph{proof obligations} ensuring that, for
instance, a binary relation is actually a function. \setlog follows the same
approach. Actually, \setlog uses its automated proving power to
discharge those proof obligations. Hence, \setlog combines type checking and constraint
solving to ensure program correctness.

In this way, if we want \verb+Books+ to be a function we
can use the \verb+pfun+ constraint in combination with a type declaration:
\verb+dec(Books,rel(bid,title)) & pfun(Books)+. \verb+pfun+ is a \setlog constraint asserting that its argument is a
partial function.

This approach is particularly amenable to work with \emph{invariance lemmas}.
An invariance lemma states that some property, the \emph{invariant}, is
preserved along all program executions.
Indeed, one can propose a typing property that cannot be enforced by the type
system as a program invariant. Afterwards, one proves that the program
preserves that invariant by discharging the corresponding invariance lemma. The
obvious problem with this approach is the  fact that discharging an invariance
lemma requires, in general, a manual proof. However, if the invariance lemma corresponds to a formula
belonging to some decidable theory, it can, in principle, be automatically
discharged. Here is where \setlog constraint solving capabilities come into
play.

\begin{example}\label{e:typeinv}
The type system
ensures that, for instance, \verb+[B,12]+ cannot be added to \verb+Books+, but it cannot
ensure that \verb+Books+ is a function.
On the other hand, \setlog (without types) can automatically prove that \verb+pfun(Books)+ is a
type invariant for that operation but it cannot check that \verb+[B,12]+ cannot be
an element of \verb+Books+ before the program is executed.
In order to prove that
\verb+pfun(Books)+ is a type invariant the following is (automatically) proved to
be unsatisfiable:
\begin{verbatim}
dec([Books,Books_],rel(bid,title)) & dec(B,bid) & dec(T,title) &
pfun(Books)  &  addBook(Books,B,T,Books_)  &  npfun(Books_)
\end{verbatim}
where \verb+npfun+ is interpreted as $\lnot$\verb+pfun+. We know this can be automatically proved because the formula belongs to a
decidable fragment implemented in \setlog (see Section \ref{constraintsolver}). 
\qed
\end{example}

Furthermore, types simplify some proof obligations thus further reducing the
verification effort. The following is an example.

\begin{example}\label{ex:lessproofs}
Consider that in the library system $\texttt{Books} \in \texttt{bid} \pfun \texttt{title}$
should be an invariant of the system. This invariant can be mathematically rewritten as:
$\dom \texttt{Books} \subseteq \texttt{bid} \land \ran \texttt{Books} \subseteq \texttt{title} \land
\texttt{pfun}(\texttt{Books})$. With the type declaration
\verb+dec(Books,rel(bid,title))+ the first two conjuncts are proved by
the typechecker. Then, \verb+pfun(Books)+ is the only property the constraint
solver has to prove.
\qed
\end{example}

As set constraint solving is always less efficient than type checking, the
combination between type checking and constraint solving makes a better use of
the computing resources because type checking finds errors before constraint solving is called.

Once \setlog has been used to verify the program, it can also be used
to run simulations.

\begin{example}\label{ex:addBook}
The predicate \verb+addBook+ can be called as a normal subroutine.  
\begin{verbatim}
{log}=> addBook({},bid?b1,title?the_farm,Books_).
Books_ = {[bid?b1,title?the_farm]}
{log}=> addBook({},bid?b1,title?the_farm,Books1)
        & addBook(Books1,bid?b2,title?houses,Books_).
Books_ = {[bid?b1,title?the_farm],[bid?b2,title?houses]}
{log}=> addBook({},bid?b1,title?the_farm,Books1)
        & addBook(Books1,bid?b1,title?houses,Books_).   % same ID
false
\end{verbatim}
We can also define other operations on the library:
\begin{verbatim}
dec_p_type(title(rel(bid,title),bid,title)).
title(Books,B,T) :- applyTo(Books,B,T).
\end{verbatim}
where \verb+applyTo(F,X,Y)+ is true when $F(X) = Y$. Then we can run the program calling all the operations:
\begin{verbatim}
{log}=> addBook({},bid?b1,title?the_farm,Books1)
        & addBook(Books1,bid?b2,title?houses,Books_)
        & title(Books_,bid?b1,T).
T = title?the_farm
\end{verbatim}
\qed
\end{example}

\section{\label{language}A type system for \setlog}

In this section we describe the type system defined for \setlog. After the introduction of types, every variable, term and predicate is of or has a type\footnote{The type of a predicate is given by the type of each of its arguments. See examples in Section \ref{motiv}.}. 
The type system includes types for a subclass of Prolog atoms,
integer numbers, sum types\footnote{Tagged union, variant, variant record,
choice type, discriminated union, disjoint union, or coproduct.}, ordered pairs
and sets, which can be recursively combined. This recursion enables the
definition of types corresponding to sets of ordered pairs which allow the
correct typing of binary relations. Terms of any type can be used to build set
terms as long as they are all of the same type, no nesting restrictions being
enforced (in particular, membership chains of any finite length can be
modeled). In this and the following section we use a more abstract, math-oriented notation, instead of the \setlog code shown in the previous section and in Section \ref{setlog}. A fully detailed presentation can be found in the on-line document \cite{DBLP:journals/corr/abs-2205-01713}.

The type system is given by the following grammar:
\begin{flalign*}
& \tau ::= \Itype | b | \Utype([\overbrace{C,\dots,C}^{\geq 2}]) | \Ptype(\tau,\tau) | \Stype(\tau) \\
& C ::= a | a(\tau)
\end{flalign*}
where $b \in \Ut$ and $a \in \At$ whereas $\At$ represent the set of Prolog atoms and $\Ut$ is a countable set of \emph{type names}.

Intuitively, $\Itype$ corresponds to the type of integer numbers;
$\Utype([C_1,\dots,C_n])$ with $2 \leq n$ defines a \emph{sum type} given by
all the values that can be built by each constructor $C_i$; $\Ptype(T_1,T_2)$
defines the \emph{product type} given by the Cartesian product $T_1
\cross T_2$; and $\Stype(T)$ defines the \emph{powerset type} of type $T$. Each
constant $b \in \Ut$ represents the \emph{type name} of a \emph{basic type}
interpreted as the set $\{b \qm a | a \in \At\}$. 
An example of a sum type is $\Utype([\ac{nil},\ac{some}(\Itype)])$ representing
the set $\{\ac{nil}\} \cup \{\ac{some}(i) | i \in \num\}$.
We write
$\Rtype(\tau_1,\tau_2)$ as a shortcut for $\Stype(\Ptype(\tau_1,\tau_2))$.
Clearly, $\Rtype$ represents the type of binary relations.
This type system is aligned with those of Z \cite{Spivey00} and B \cite{Abrial00}.

The type of the main \setlog function symbols is defined in Figure
\ref{f:typefunc}, where: 
$dec(v,t)$ is a predicate interpreted as ``variable $v$ is of type $t$''; 
a typing context $\Gamma$ is a set of $\Dec$ predicates; 
$\tjud{s:\tau}$ can be read as `in typing context $\Gamma$, $s$ is of or has type $\tau$';
$v \notin \dom \Gamma$ means that $v$ is not
one of the variables declared in $\Gamma$; 
$\tau = \dots\Utype([a_1,a_2(\tau_2)])\dots$ asserts that
$\Utype([a_1,a_2(\tau_2)])$ is part of the definition of $\tau$; 
and $\{a_1,a_2\} \parallel (\tau \cup \Gamma)$ asserts that $\{a_1,a_2\}$ is disjoint w.r.t. the union of the constructors of any other $\Utype$ used in $\tau$ and $\Gamma$ (except itself). Symbols $\{\}$, $[\cdot,\cdot]$ and $\{\cdot/\cdot\}$ were introduced in Section \ref{motiv}; $int(m,n)$ stands for the set $\{z \in \num \mid m \leq z \leq n\}$;
and $\boxplus$ stands for any integer binary operator.

Rule \textsc{Sum} is given for a
simple case to keep the presentation manageable; it can be easily extended to
any number of constructors. Note that a sum constructor can be used in only one
sum type, and that there must be at least two constructors in a sum type. 

As can be seen, several function symbols are polymorphic. In particular, the
type of a term of the form $\ut{u}?a$ is given by its first argument.
 As terms must have exactly one type, any $a \in \At$ can be used
as the constructor of at most one $\Utype$ in $\Gamma$. The type of a variable
is the type given in its declaration through the $\Dec$ predicate in $\Gamma$.
With this type system a set such as $\{\ac{a}, 1, \{2\}, (5,4)\}$ is ruled out because not all of its elements are of the same type. However, that set can be encoded as $\{\ac{a}, \ac{n}(1), \ac{s}(\{2\}),
\ac{p}((5,4))\}$, where $\ac{a}$, $\ac{n}$, $\ac{s}$ and $\ac{p}$ are
constructors of a $\Utype$ type. If in a $\Utype$ all constructors are nullary
terms then we write $\Etype([a_1,\dots,a_n])$ to emphasize the fact that the
sum is actually defining an enumeration.

\begin{figure}
\begin{equation*}
\begin{array}{cc}
\inference{}{\tjud{\{\}: \Stype(\tau)}} & 
%
\inference[\textsc{Ext}]
    {\tjud{x:\tau} & \tjud{A:\Stype(\tau)}}
    {\tjud{\{x / A\}: \Stype(\tau)}} \\[5mm]
%
\inference[\textsc{Int}]
    {\tjud{m:\Itype} & \tjud{n:\Itype}}
    {\tjud{int([m,n): \Stype(\Itype)}} & 
%
\inference{n \in \num}{\tjud{n: \Itype}} \\[5mm]
%
\inference[\textsc{Ar}]
    {\tjud{m:\Itype} & \tjud{n:\Itype}}
    {\tjud{m \boxplus n: \Itype}} & 
%
\inference[\textsc{Prod}]
    {\tjud{x:\tau_1} & \tjud{y:\tau_2}}
    {\tjud{[x,y]: \Ptype(\tau_1,\tau_2)}} \\[5mm]
%
\inference
  {b \in \Ut & a \in \At_0}
  {\tjud{b \qm a: b}} & 
%
\inference
  {v \in Var & \tau \in \TYPES & v \notin \dom \Gamma}
  {\gtjud{\Dec(v,\tau)}{v:\tau}}
\end{array}
\end{equation*}
\begin{gather*}
%
\inference[\textsc{Sum}]
    {\{a_1,a_2\} \subset \At &
     \tau = \dots\Utype([a_1,a_2(\tau_2)])\dots &
     \{a_1,a_2\} \parallel (\tau \cup \Gamma) &
     x:\tau_2}
    {\gtjud{\Dec(v,\tau)}{a_1,a_2(x): \Utype([a_1,a_2(\tau_2)])}}
\end{gather*}
\caption{\label{f:typefunc}Type rules for function symbols}
\end{figure}

\begin{example}
The following are some examples on how terms are typed.
\begin{align*}
& \gtjud{\Dec(k,\Itype)}{k + 1: \Itype} \\
& \gtjud{\Dec(A,\Stype(\Itype))}{\{4,9 / A\}: \Stype(\Itype)} \\
& \{\Dec(m,\Itype), \Dec(x,\Utype([\ac{a},\ac{e}(\Itype)])\} \vdash
  (\{m\},\ac{e}(m+1)):
    \Ptype(\Stype(\Itype),\Utype([\ac{a},\ac{e}(\Itype)])) \\
& \tjud{\{\uc{u}{a},\uc{u}{aa},\uc{u}{ab}\}:\Stype(\ut{u})} \\
& \{\uc{u}{a},\uc{v}{a}\}: \text{cannot be typed because \uc{u}{a} has type \ut{u} and \uc{v}{a} has type \ut{v}} \\
& \ut{u}?10: \text{cannot be typed because $10 \notin \At$} \\
& \gtjud{\Dec(x,\Utype([\ac{a},\ac{b},\ac{a}(\ut{t})]))}
        {\dots} \\
&\qquad\text{ is an ill-formed typing context because \ac{a} appears twice} \\
& \gtjud{\Dec(x,\Ptype(\Etype([\ac{a},\ac{b}]),
                       \Utype([\ac{q},\ac{a}(\ut{t})]))}
        {\dots} \\
&\qquad\text{ is an ill-formed typing context because \ac{a} appears in two different $\Utype$} \\
& \gtjud{\Dec(x,\Etype([\ac{a,b}])),
                        \Dec(y,\Etype([\ac{q},\ac{a}]))}
        {\dots} \\
&\qquad\text{ is an ill-formed typing context because \ac{a} appears in two
different $\Etype$}
\end{align*}
In the last three examples the problem is that the terms \ac{a}, \ac{b} and
\ac{q} cannot be typed because no type rule can be applied to infer their
types. 
\qed
\end{example}

The type of each primitive constraint available in \setlog is given in Figure \ref{f:typepred}.
If $\tau_1,\dots,\tau_n$ are types, a type judgment such as $\tjud{\pi(\tau_1,\dots,\tau_n)}$ can be read as `predicate $\pi$ is correctly typed'.
Constraints are interpreted as follows: $x \mathbin{neq} y$ as $x \neq y$; $x \mathbin{in} A$ as $x \in A$; $x \mathbin{nin} A$ as $x \notin A$; $\Cup(A,B,C)$ is interpreted as $C = A \cup B$; $\Disj(A,B)$ as $A \cap B = \e$; $\Size(A,m)$ as $\lvert A \rvert = m$; $\Id(A,R)$ as $R = \{(x,x) \mid x \in A\}$, i.e., the identity relation on set $A$; $\Inv(R,S)$ as $S = \{(y,x) \mid (x,y) \in R\}$, i.e., the converse of a binary relation; and $\Comp(R,S,T)$ as $T = \{(x,z) \mid \exists y ((x,y) \in R \land (y,z) \in S)\}$, i.e., composition of binary relations.
Again, several constraints are polymorphic. Besides, it is worth to be noticed
that in untyped formalisms a constraint such as $1 \neq (2,4)$ would usually be
deemed as a true proposition while in \setlog such a constraint is ill-typed
and thus rejected.

\begin{figure}
\begin{equation*}
\begin{array}{ccc}
\inference[\textsc{Eq}]
  {\tjud{x:\tau} & \tjud{y:\tau}}
  {\tjud{x = y, x \mathbin{neq} y}} & 
%
\hspace{2cm}
\inference[\textsc{Mem}]
  {\tjud{x:\tau} & \tjud{A:\Stype(\tau)}}
  {\tjud{x \mathbin{in} A, x \mathbin{nin} A}} 
\end{array}
\end{equation*}
\begin{gather*}
\inference[\textsc{Un}]
  {\tjud{A:\Stype(\tau)} & \tjud{B:\Stype(\tau)} & \tjud{C:\Stype(\tau)}}
  {\tjud{\Cup(A,B,C)}} 
\end{gather*}
\[
\begin{array}{cc}
\inference
  {\tjud{A:\Stype(\tau)} & \tjud{B:\Stype(\tau)}}
  {\tjud{\Disj(A,B)}} & 
%
\inference[\textsc{Sz}]
  {\tjud{A:\Stype(\tau)} & \tjud{m:\Itype}}
  {\tjud{\Size(A,m)}} \\[6mm]
%
\inference[\textsc{Id}]
  {\tjud{A:\Stype(\tau)} & \tjud{R:\Rtype(\tau,\tau)}}
  {\tjud{\Id(A,R)}} & 
%
\inference
  {\tjud{R:\Rtype(\tau_1,\tau_2)} & \tjud{S:\Rtype(\tau_2,\tau_1)}}
  {\tjud{\Inv(R,S)}} 
\end{array}
\]
\begin{gather*}
\inference[\textsc{Rc}]
  {\tjud{R:\Rtype(\tau_1,\tau_2)} & \tjud{S:\Rtype(\tau_2,\tau_3)} & \tjud{T:\Rtype(\tau_1,\tau_3)}}
  {\tjud{\Comp(R,S,T)}} \\[1mm]
%
\inference[\textsc{Leq}]
  {\tjud{m:\Itype} & \tjud{n:\Itype}}
  {\tjud{m \leq n}}
\end{gather*}
\caption{\label{f:typepred}Type rules for predicate symbols in $\Pi_C$}
\end{figure}

In \setlog,  formulas are built in the usual way by connecting constraints (i.e., predefined predicates) by means of conjunction ($\land$), disjunction ($\lor$) and negation ($\lnot$). Formulas can also contain user-defined predicates (\texttt{addBook/4} in Section \ref{motiv} is an example).
A well-typed formula is a formula where all its constraints are typed according to the rules given in Figure \ref{f:typepred} and user-defined predicates are typed by means of \verb+dec_p_type+ declarations. 

\begin{example}\label{ex:formulas}
The following are well-typed \setlog formulas:
\begin{align}
& \Id(\{x / A\},R) \land \Id(R,A) \land \Dec(x,\ut{t}) \land \Dec(A, \Stype(\ut{t})) \land \Dec(R,\Rtype(\ut{t},\ut{t})) \\
& \Dec([A,B,C],\Stype(\Itype)) \land \Dec([n,k],\Itype) \land \\
& \qquad \Cup(A,B,C) \land n+k > 5 \land \Size(C,n) \land B \neq \{\} \notag \\
& \Dec(y,\ut{t}) \land \Dec(R,\Rtype(\ut{t},\Itype)) \land \Dec([S,T],\Rtype(\Itype,\ut{t})) \land \Dec(x,\Ptype(\Itype,\ut{t})) \land \\
& \qquad [y,5] \in R \land [2,\uc{t}{b}] \notin S \land \Inv(R,S) \land S = \{x / T\} \notag
\end{align}
On the contrary, $\Dec(x,\ut{t}) \land \Dec(y,\Itype) \land x \mathbin{neq} y$ is not well-typed because $x \mathbin{neq} y$ cannot be typed as $neq$ expects arguments of the same type (rule \textsc{Eq} in Figure \ref{f:typepred}). 
\qed
\end{example}

\paragraph{Expressiveness.}
As we have said, Figure \ref{f:typepred} presents the type of each \setlog primitive constraint. These constraints are used to define a number of integer, set and relational constraints by means of suitable formulas.
\citeN{Dovier00} proved that $\Cup$, $\in$ and $\Disj$ are enough to define constraints implementing the set operators
$\cap$, $\subseteq$ and $\setminus$. For example, $A \subseteq B$ can be
defined by the formula $\Cup(A,B,B)$. In turn, these constraints plus $\Id$, $\Inv$ and $\Comp$ are as expressive as the class of finite set relation algebras
\cite{DBLP:journals/jar/CristiaR20}. Within this class of algebras it is
possible to define many relational operators such as domain, range, domain
restriction, relational image, etc.  
Finally, by adding the $\Size$ constraint and integer intervals, it is possible to define  operators such as the minimum of a set, the
successor function on a set, partition of a set w.r.t. a given number, etc.
\cite{DBLP:journals/tocl/CristiaR24}.

The negated versions of set, relational and integer operators can be introduced in the same way \cite{Dovier00} (\citeANP{DBLP:journals/jar/CristiaR20}
\citeyearNP{DBLP:journals/jar/CristiaR20,DBLP:journals/tplp/CristiaR23,DBLP:journals/tocl/CristiaR24}). 
For example, $\lnot(A \cup B = C)$ is introduced as:
\begin{equation}\label{e:nun}
\Ncup(A,B,C) \defs
     (n \mathbin{in} C \land n \mathbin{nin} A \land n \mathbin{nin} B)
     \lor (n \mathbin{in} A \land n \mathbin{nin} C)
     \lor (n \mathbin{in} B \land n \mathbin{nin} C)
\end{equation}

The combination between sum and product types permit to encode arbitrary
compound terms. For instance, a term of the form $\ac{p}(2,\{4\})$ can be
encoded as $\ac{p}([2,\{4\}])$ whose type is
$\Utype([\ac{p}(\Ptype(\Itype,\Stype(\Itype))), \dots])$.


\section{\label{satcard}Type safety}

The definition of a type system for a CLP language entails
to prove that the operational semantics of the language
preserves the types of variables and terms as it processes any well-typed
formula. This is called \emph{type soundness} or \emph{type safety}
\cite[Chapter 4]{DBLP:books/cu/Ha2016}. The operational semantics of a CLP
language is given primarily by its constraint solving
procedure.

\subsection{\label{constraintsolver}The \setlog constraint solver}

The constraint solver for \setlog with types is the same solver for \setlog without types\footnote{From now on, we will talk of \setlog as both the constraint language and its solver whenever is clear from context.}. In \setlog with types,
the solver is simply run once the type checking phase has finished successfully
and ignores $\Dec$ predicates. The solver for \setlog without types has been thoroughly studied elsewhere
(\citeANP{Dovier00} \citeyearNP{Dovier00} and
\citeANP{DBLP:journals/jar/CristiaR20}
\citeyearNP{DBLP:journals/jar/CristiaR20,DBLP:journals/tplp/CristiaR23,DBLP:journals/tocl/CristiaR24}).
Several results on the decidability of the satisfiability problem for these
languages have been put forward and several empirical studies showing the
practical capabilities of \setlog have
also been reported (\citeANP{DBLP:journals/jar/CristiaR21b}
\citeyearNP{DBLP:journals/jar/CristiaR21,DBLP:journals/jar/CristiaR21b};
\citeNP{CristiaRossiSEFM13}). For this reason here we give only an overview of the \setlog solver.

The \setlog solver is essentially a rewriting system whose core is a collection of
specialized rewriting procedures. Each rewriting procedure applies a few
non-deterministic rewrite rules which reduce the syntactic complexity of
\setlog constraints of one kind. \setlog takes as input a formula,
$\Phi$. In each iteration, \setlog rewrites $\Phi$ into a new formula, called
$\Phi'$, which becomes the input formula for the next iteration.

As is shown in the aforementioned papers, there are three possible outcomes
when \setlog is applied to $\Phi$:
\begin{enumerate}
\item \setlog returns $\false$ meaning that $\Phi$ is unsatisfiable.
\item \setlog cannot rewrite $\Phi'$ anymore and it is not $\false$,
so $\Phi'$ is returned as it is. Since \setlog can
open a number of non-deterministic choices, many such $\Phi'$ can be returned
to the user. The disjunction of all these $\Phi'$ is equivalent to $\Phi$. The
constraints making these formulas are of a particular kind called
\emph{irreducible} (\citeANP{Dovier00} \citeyearNP{Dovier00} and
\citeANP{DBLP:journals/jar/CristiaR20}
\citeyearNP{DBLP:journals/jar/CristiaR20,DBLP:journals/tplp/CristiaR23,DBLP:journals/tocl/CristiaR24}).
\item If $\Phi$ belongs to an undecidable fragment of the language supported by \setlog,
then the above steps will not terminate.
\end{enumerate}

As we have said, the core of \setlog is a collection of rewriting procedures
which in turn contain a collection of rewrite rules. There are about 60 of such
rules which can be found online \cite{calculusBR}. Figure \ref{f:rules} lists
some representative rewrite rules for the reader to have an idea of what they
look like. As can be seen, each rule has the form: $\phi \longrightarrow \Phi_1
\lor \dots \lor \Phi_n$, where $\phi$ is a \setlog constraint and $\Phi_i$, $i
\geq 1$, are \setlog formulas. Besides, each rule is applied depending on some
syntactic conditions on the constraint arguments. For example, rule
\eqref{un:ext1} applies only when the first argument is an extensional set and
the third is a variable (noted as $\dot{B}$); the second argument can be any
set term. On the right-hand side of the rules, $n,n_i,N,N_i$ represent new
variables.

\begin{figure}
\hrule\vspace{3mm}
\begin{flalign}
\quad\quad & \{x  / A\} = \{y / B\} \longrightarrow & \notag  \\
  & \qquad x = y \land A = B & \notag \\
  & \qquad \lor x = y \land \{x / A\} = B & \label{eq:ext} \\
  & \qquad \lor x = y \land A = \{y / B\} & \notag \\
  & \qquad \lor A = \{y / N\} \land \{x / N\} = B & \notag \\[2mm]
& \Cup(\{x / C\}, A, \dot{B}) \longrightarrow & \notag \\
  & \qquad  \{x / C\} = \{x / N_1\}
      \land x \mathbin{nin} N_1 \land B = \{x / N\} & \label{un:ext1} \\
  & \qquad \land (x \mathbin{nin} A \land \Cup(N_1, A, N) & \notag \\
  & \qquad {}\qquad\lor A = \{x / N_2\}
                 \land x \mathbin{nin} N_2 \land \Cup(N_1, N_2,N)) & \notag \\[2mm]
& \Size(\{x \plus A\},m) \longrightarrow & \notag \\
   & \qquad x \mathbin{nin} A \land m = 1 + n \land \Size(A,n) \land 0 \leq n & \label{size:ext} \\
   & \qquad \lor A = \{x / N\} \land x \mathbin{nin} N \land \Size(N,m)
      & \notag 
%
\end{flalign}
 \hrule
\caption{\label{f:rules}Some rewrite rules implemented by $\SATSET$}
\end{figure}

Rule \eqref{eq:ext} is the main rule of set unification \cite{Dovier2006}, a concept at the base of \setlog.
It states when two
non-empty, non-variable sets are equal by non-deterministically and recursively
computing four cases. As an example, by applying rule
\eqref{eq:ext} to $\{1\} = \{1,1\}$ we get: ($1 = 1 \land \{\} = \{1\}) \lor (1 =
1 \land \{1\} = \{1\}) \lor (1 = 1 \land  \{\} = \{1,1\}) \lor (\{\} = \{1 / N\} \land  \{1 / N\} = \{1\})$, which turns out to be true
(due to the second disjunct).

Rule \eqref{un:ext1} is one of the main rules for $\Cup$ constraints.
Observe that this rule is based on set unification, as well. It computes two cases: $x$
does not belong to $A$ and $x$ belongs to $A$ (in which case $A$ is of the form
$\{x / N_2\}$ for some set $N_2$).  In the latter case $x
\mathbin{nin} N_2$ prevents \setlog from generating infinite terms
denoting the same set.

Rule \eqref{size:ext}
computes the size of any extensional set by counting the elements that belong
to it while taking care of avoiding duplicates. This means that, for instance,
the first non-deterministic choice for a formula such as
$\Size(\{1,2,3,1,4\},m)$ will be:
\[
1 \mathbin{nin} \{2,3,1,4\}
  \land m = 1 + n \land \Size(\{2,3,1,4\},n) \land 0 \leq n
\]
which will eventually lead to a failure due to the presence of  $1 \mathbin{nin} \{2,3,1,4\}$.

\subsection{Proving type safety}


The theorem stating type safety for \setlog follows the guidelines
set forth by \citeN[Chapter 6]{DBLP:books/cu/Ha2016}. However, due to the
peculiarities of CLP we have to introduce some modifications to
the theorem. Indeed, most of the foundational work on type systems has been
done in a framework where programs are functions
\cite{DBLP:journals/jcss/Milner78,DBLP:books/daglib/0000395,DBLP:journals/iandc/WrightF94,DBLP:books/daglib/0005958,DBLP:books/daglib/0032840},
so it is natural to adapt some of its concepts and results to CLP.

Intuitively, Harper's theorem 
states that: the type of any given term remains the same during the execution
of the program; and if a term is well-typed, either it is a value or it can be
further rewritten by the system. The first property is called
\emph{preservation} (or \emph{subject reduction}) and the second is called
\emph{progress}.
This theorem is adapted to our context as follows.

\begin{theorem}[\setlog is type safe]\label{t:soundness}
Consider any \setlog constraint $\pi$. Let $t_1:\tau_1, \dots, t_k:\tau_k$ be such that  $\pi(t_1,\dots,t_k)$ is a well-typed constraint.
Then ($\longrightarrow$ denotes any \setlog rewrite rule):
\begin{enumerate}[leftmargin=1cm,label=(\Roman*)]
\item\label{i:pre} If $\pi(t_1,\dots,t_k) \longrightarrow \Phi$, then there
exist types $\tau_1', \dots, \tau_m'$ ($0 \leq m$) such that:
\[
\Dec(n_1,\tau_1') \land \dots \land \Dec(n_m,\tau_m') \land \Phi
\]
is a well-typed formula, were $n_1,\dots,n_m$ are the new variables of $\Phi$.
\item\label{i:pro} $\pi(t_1,\dots,t_k)$ is irreducible, or there exists $\Phi$
such that $\pi(t_1,\dots,t_k) \longrightarrow \Phi$
\qed
\end{enumerate}
\end{theorem}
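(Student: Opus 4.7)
The plan is to prove both clauses by a case analysis on the rewrite rule $\longrightarrow$ used from $\SATSET$, keyed on the top-level predicate symbol $\pi$ and on the syntactic shape of the arguments $t_1,\dots,t_k$ (which is what determines the applicable rule in $\SATSET$). Since the statement is about a single rewriting step, no induction on the length of a derivation is needed here; the machinery required is entirely the typing rules of Figures \ref{f:typefunc} and \ref{f:typepred}, together with Lemmas \ref{l:uniq} and \ref{l:adddec}.

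For preservation (clause \ref{i:pre}), I would proceed rule by rule across the roughly sixty rules of $\SATSET$ displayed in \cite{calculusBR} and exemplified in Figure \ref{f:rules}. For each rule, the recipe is the same: (i) read off from $\pi$'s typing rule in Figure \ref{f:typepred} the constraints imposed on the types of $t_1,\dots,t_k$; (ii) decompose the arguments according to the rule's syntactic matching (e.g.\ the $\{x\plus C\}$ pattern forces $x$ and $C$ to have types compatible with rule \textsc{Ext}); (iii) for each fresh variable $v_j$ produced on the right-hand side, read its intended type $\tau_j'$ off the context in which it appears (this is the unique type guaranteed by Lemma \ref{l:uniq}, after treating $v_j$ as a variable in the terms built around it); (iv) check that, with the augmented context $\mathcal{D}(t_1{:}\tau_1;\dots;t_k{:}\tau_k;v_1{:}\tau_1';\dots;v_m{:}\tau_m')$, every atomic constraint appearing in $\Phi$ satisfies its typing rule. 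Lemma \ref{l:adddec} is what guarantees that successively adding fresh declarations keeps the $\Dec$-block a well-formed $\SET$-formula (clauses (a)--(d) of rule \ref{i:dec} in Definition \ref{formula}), provided we check that no new $\Etype$ is smuggled in, which is immediate because the rules of $\SATSET$ never introduce enumerated constants.

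Concretely, on rule \eqref{un:ext1} for $\Cup(\{x\plus C\},A,\dot B)$: from $\textsc{Un}$ and $\textsc{Ext}$ we obtain $x{:}\tau$, $C,A,\dot B{:}\Stype(\tau)$ for some $\tau$, and the fresh variables $N_1,N_2,N$ are typed as $\Stype(\tau)$; then each conjunct on the right is verified by \textsc{Eq}, \textsc{Mem}, \textsc{Ext} or \textsc{Un}. The same template handles rules \eqref{eq:ext}, \eqref{size:ext}, \eqref{rule:comp_e}; the relational rules (e.g.\ \eqref{rule:comp_e}) are the finickier cases because the constructor $(x,u)$ forces an instantiation of the pair-type parameters of the surrounding $\Rtype$, so one has to spell out how $\tau_1,\tau_2,\tau_3$ in \textsc{Rc} are pinned down by the pattern matched on the left. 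Progress (clause \ref{i:pro}) is handled more easily: by inspection, $\SATSET$ is presented with a rewrite rule for every well-typed constraint whose arguments have a non-irreducible syntactic shape, so either a rule fires or, by definition of irreducibility in \cite{Dovier00,DBLP:journals/jar/CristiaR20,cristia_rossi_2021,DBLP:journals/corr/abs-2105-03005}, $\pi(t_1,\dots,t_k)$ is already irreducible; well-typedness is used only to rule out sort clashes (e.g.\ an $\Cup$ applied to an integer term) that would leave a constraint syntactically stuck without being irreducible.

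The main obstacle is pure bookkeeping: the rule set is large and some rules (set unification, relational composition and identity rules) produce many fresh variables and several disjuncts, so each disjunct has to be checked separately and each fresh variable's type has to be traced through its uses. The one genuinely delicate point is polymorphism: several rules leave the type parameter $\tau$ (or $\tau_1,\tau_2,\tau_3$ for relational rules) unconstrained on the left, and one must verify that the right-hand side uses a single consistent instantiation; Lemma \ref{l:uniq} applied to the non-variable terms occurring on the right-hand side is what makes this unambiguous, so the entire argument reduces to a routine, but lengthy, rule-by-rule verification, which is why the authors defer the full text to \cite{typesafety}.
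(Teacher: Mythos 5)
Your proposal is correct and follows essentially the same route as the paper: a rule-by-rule preservation check that derives the argument types from Figure \ref{f:typepred}, types the fresh variables, uses Lemma \ref{l:adddec} to keep the $\Dec$-block a $\SET$-formula, and then verifies each disjunct and atomic constraint against Definition \ref{formula} and the typing rules, with the full enumeration deferred to the external reference. Your progress argument likewise matches the paper's, which reduces it to the already-established fact that the untyped solver $\SATSET$ rewrites every non-irreducible constraint of $\mathcal{L}_{\SET}$ and that well-typed $\LSET$ formulas form a subset of those.
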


The proof can be found in the on-line document \cite{DBLP:journals/corr/abs-2205-01713}. As Theorem \ref{t:soundness} suggests, the proof entails to prove that each of the 60 rewrite rules present in \setlog are type safe.

\section{\label{setlog}Implementation}

A typechecker for \setlog has been implemented in Prolog. The program comprises about 1.1 KLOC of which 200 are devoted to error
printing. It uses only basic, standard Prolog predicates. The typechecker can
be downloaded from the \setlog web site \cite[file
\texttt{setlog\_tc.pl}]{setlog}. As we have pointed out in Section \ref{motiv},
the typechecker can be activated and deactivated by users at their will. If the
typechecker is not active $\Dec$ predicates are ignored and \setlog works as
usual.

The typechecker is implemented in four
phases (see Figure \ref{f:phases}): \textsc{phase 1}, the typechecker analyzes
the $\Dec$ predicates in the formula; \textsc{phase 2}, 
the logical structure of the formula is analyzed; \textsc{phase 3}, once we have a conjunction
of atomic constraints the typechecker goes to check the type of each atomic
constraint using rules in Figure \ref{f:typepred}; and
\textsc{phase 4}, the type of each argument, of a given constraint, is checked
using rules in Figure \ref{f:typefunc}.

\begin{figure}
\centering
\includegraphics[scale=.7]{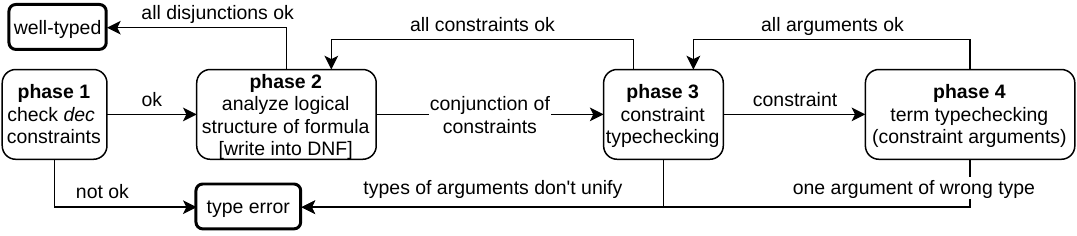}
\caption{\label{f:phases}Graphical depiction of the four phases
implemented by the typechecker}
\end{figure}

Going deeply into implementation details, in \textsc{phase 1}, $\Dec$
predicates that are found correct by the typechecker are turned into `type'
facts of the form \verb+type(Var,type)+, asserting that the type of the variable
is the one declared by the user. 

Phases 3 and 4 are solved as unification problems. For example, the following
clause typechecks $\Id$ constraints (i.e., \textsc{phase 3}):
\begin{verbatim}
typecheck_constraint(id(X,Y)) :- !,
  typecheck_term(X,Tx),
  typecheck_term(Y,Ty),
  (Tx = set(T), Ty = set(prod(T,T)),!
   ;
   print_type_error(id(X,Y),Tx,Ty)
  ).
\end{verbatim}
As can be seen, if the inferred types of \verb+X+ and \verb+Y+ cannot be
unified in terms of a common type \verb+T+ (as indicated in rule \textsc{Id} of
Figure \ref{f:typepred}), a type error is issued.

\subsection{Introducing parametric polymorphism}

Consider the following clause:
\begin{verbatim}
applyTo(F,X,Y) :- F = {[X,Y] / G} & [X,Y] nin G & comp({[X,X]},G,{}).
\end{verbatim}
It states minimum conditions for applying binary relation \verb+F+ to a given point
\verb+X+ whose image is \verb+Y+. \verb+applyTo+ is meant to be a \emph{polymorphic}
definition in the sense that \verb+F+ is expected to be a binary relation of
\emph{any} \verb+rel+ type, \verb+X+ is expected to be of the domain type of \verb+F+ and
\verb+Y+ of the range type. This is usually called \emph{parametric polymorphism}.

Clauses that are meant to be polymorphic must be preceded by a \verb+dec_pp_type+
fact asserting the type of each argument\footnote{\texttt{dec\_pp\_type} stands for
`\textit{dec}lare \textit{p}olymorphic \textit{p}redicate \text{type}'.}. For
example, the typed version of \verb+applyTo+ is the following:
\begin{verbatim}
dec_pp_type(applyTo(rel(T,U),T,U)).
applyTo(F,X,Y) :- F = {[X,Y] / G} & [X,Y] nin G & comp({[X,X]},G,{}).
\end{verbatim}
Differently from \verb+dec_p_type+ declarations, the types written in a
\verb+dec_pp_type+ declaration can contain variables (e.g., \verb+T+). 

Users can query polymorphic clauses such as \verb+applyTo+  by giving types unifying
with those declared in the corresponding \verb+dec_pp_type+ declarations.

\subsection{Dealing with finite types}

When working in typechecking mode, the following goal is unsatisfiable:
\begin{Verbatim}[commandchars=\\\$\$]
dec(Z,enum([t,f])) & Z neq t & Z neq f.
\end{Verbatim}
because the only two values \verb@Z@ can take are exactly \verb@t@ and \verb@f@. In this case, \setlog automatically transforms that goal into:
\begin{Verbatim}[commandchars=\\\$\$]
Z in {t,f} & Z neq t & Z neq f.
\end{Verbatim}
However, if the typechecker is not active, the first given goal is found to be satisfiable because the \verb@dec@ predicate is ignored and \verb@Z@ can take any value beyond \verb@t@ and \verb@f@.

As another example, consider the following goal:
\begin{Verbatim}[commandchars=\\\$\$]
dec(F,rel(enum([t,f]),int)) & dec([X1,X2,X3],[enum([t,f]),int]) & 
pfun(F) & F = {X1,X2,X3} & X1 neq X2 & X1 neq X3 & X2 neq X3.
\end{Verbatim}
As \verb@F@ is a partial function and given the \verb@neq@ constraints, the first components of \verb@X1@, \verb@X2@ and \verb@X3@ must be different from each other. At the same time, these first components have type \verb@enum([t,f])@. So at least two of these first components must have the same value. Consequently the goal is unsatisfiable. As with the first goal, when working in typechecking mode, \setlog identifies this situation and automatically conjoin suitable membership constraints to make type information available to the constraint solver. 

These situations arise when the formula involves variables whose types are finite and entails, in a way or another, ``too many'' \verb@neq@ constraints. In these situations the constraint solver retrieves the type information generated by the typechecker transforming it into suitable membership constraints. This interplay between these two components is crucial to the correctness of \setlog when working in typechecking mode.

\subsection{\label{typesCLP}Case studies}

The combination between typechecking and constraint solving as presented in this note has been applied to some case studies. Here we briefly present two of them. Both case studies are based on problems,
specifications and verification conditions proposed by others, thus reducing a
possible bias towards our method. We use these case studies as benchmarks to
empirically evaluate \setlog. 
Each case study shows a \setlog program taking the form of a state machine.
Transitions of these state machines are encoded as \setlog predicates---e.g., as
\verb+addBook+. In turn, these predicates constitute an executable API from which
the program can be run---as  shown in Example \ref{ex:addBook}. At the same
time, these \setlog programs behave as specifications over which the
verification conditions are automatically discharged---as shown in Example
\ref{e:typeinv}.


\paragraph{The Landing Gear System.}

The first case study is based on the Landing Gear System (LGS) problem
\cite{DBLP:conf/asm/BoniolW14}. \citeANP{DBLP:conf/asm/MammarL14}
(\citeyearNP{DBLP:conf/asm/MammarL14,DBLP:journals/sttt/MammarL17}) developed
an Event-B \cite{Abrial:2010:MES:1855020} specification formally verified using
Rodin \cite{DBLP:journals/sttt/AbrialBHHMV10}, ProB \cite{Leuschel00} and
AnimB\footnote{\url{http://www.animb.org}}. They were able to automatically
discharge 72\% of the proof obligations by calling Rodin.
Hence, we encoded in \setlog the entire Event-B specification of the LGS
and used \setlog to automatically discharge 100\% of the proof obligations
generated by the Rodin tool in roughly 290 s \cite{Cristia2024}. The \setlog encoding of the
LGS\footnote{\url{https://www.clpset.unipr.it/SETLOG/APPLICATIONS/lgs.zip}}
comprises 7.8 KLOC plus 465 proof obligations. In order to discharge all those
proof obligations we used the combination between types and CLP as discussed
above. Many proof obligations could be avoided while many others were simpler,
due to the combined work between the typechecker and the constraint solver---as shown in Example \ref{ex:lessproofs}; see also \cite[Section 4.5]{Cristia2024}.
\emph{The net result is an automatically verified \setlog prototype of the
LGS.}

\paragraph{Android Permission System.}

A model of the Android
permission system has been developed and certified in Coq
\cite{DBLP:conf/ictac/BetarteCLR15,DBLP:journals/cuza/BetarteCLR16,DBLP:journals/cleiej/LunaBCSCG18,DBLP:conf/types/Luca020}.
As with the previous case study, we translated
the Coq model into \setlog and used it to run all the verification conditions proposed in Coq\footnote{\setlog code of Android 10's permission system:
\url{http://www.clpset.unipr.it/SETLOG/APPLICATIONS/android.zip}} \cite{DBLP:journals/jar/CristiaLL23}.   \setlog is able to automatically prove 24 of the 27 properties ($\approx 90\%$) in approximately 21 m. The 3 properties that cannot be proved by \setlog require 500 LOC of Coq proof commands ($\approx 2\%$ of all proof commands). That is, \setlog automatically proves 90\% of the properties covering 98\% of the proof effort in terms of proof commands. As can be seen, the gain in terms of human effort is considerable. The type system proposed for \setlog can encode all the types used in the Coq model. This is important given that Coq is prized by its powerful type system. Then, concerning the Android model, our typechecker can prove the same type properties that are proved by Coq, whereas the constraint solver is able to automatically prove most of the properties that are manually proved in Coq.

\section{\label{relwork}Discussion and related work}

The
inclusion of prescriptive type systems in logic programming\footnote{The technical report by   \citeN{cirstea:inria-00099859} provides an introduction and survey about types in logic programs. There is also a comprehensive book on the matter edited by
\citeN{DBLP:books/mit/pfenning92/P1992}.} can be traced back to the seminal
work of \citeN{Mycroft1984}; later on \citeN{Lakshman1991} define the formal semantics for that
type system. At some point in time, these ideas started to be part of different
logic programming languages
\cite{DBLP:books/daglib/0095081,Schrijvers2008}, and even they made through all
the way to functional logic programming languages
\cite{Somogyi1996,Hanus2013}.  \citeN{Schrijvers2008} propose types to be optional in
SWI-Prolog and Yap, as we do for \setlog.

Descriptive type systems provide an approximation
of the semantics of a given program,
usually, as a set of terms greater than (cf. set inclusion) the one provided by the semantics. Answers
of a descriptive typechecker are necessarily approximate.
These type systems \cite{Zobel1987,Bruynooghe1988,Fruehwirth1991,Dart1992,%
Yardeni1992,Heintze1992,Gallagher1994,DBLP:conf/lopstr/BarbosaFC21%
} have been used in the context of abstract interpretation and
program analysis of logic programs
\cite{Vaucheret2002,Hermenegildo2005,Pietrzak2008}. 

It is worth to be noted that some other works on program approximation,  e.g.,
\cite{Heintze1990,Talbot1997}, use a technique called \emph{set-based
analysis}. This is not to be confused with our approach. We use sets and binary
relations as the main data structures for programming and specification, while
in these other works sets are used to analyze general logic programs. In other
words, we use sets and binary relations to \emph{write} programs, they use sets
to \emph{analyze} them.
Descriptive type systems were also applied to CLP, e.g., as a means to find certain classes of errors in programs
\cite{Drabent2002}. 

\citeN{DBLP:journals/tplp/FagesC01} study a prescriptive type system for CLP
programs that is independent from any constraint domain. The authors prove that
their type system verifies subject reduction (i.e., type preservation) w.r.t.
the abstract execution model of CLP (cf. accumulation of constraints), and
w.r.t. an execution model of CLP with substitutions. As far as we understand,
Fages and Coquery do not prove progress---cf. Theorem \ref{t:soundness}. Clearly, Fages and
Coquery's preservation result reliefs us of proving that \setlog's  CLP engine
preserves types because it is just an implementation of the general CLP scheme
addressed by these authors. On the other hand, Fages and Coquery's result
depends on the well-typedness of each rewrite rule of the execution model. This
is exactly what we prove in Theorem \ref{t:soundness}: that the execution model
 of our CLP language verifies
type safety---i.e., preservation and progress. However, instead of typing each
rewrite rule we prove that they preserve types. We are not aware of previous
works in the field of CLP featuring a formulation of type safety  like that of
Theorem \ref{t:soundness}. 

As we have said, \setlog's type system is inspired in the type system  of Z and
B. Borrowing ideas from these notations is quite natural as \setlog is based on
a set theory similar to those underlying Z and B. In Z and B type inference, in
the sense of deducing the type of some variable from the terms where it
participates in, is not allowed; all variables must be declared. Subtyping is
also nonexistent in Z and B. So we do in \setlog. Differently from Z and B, in
\setlog elements of basic types have a known form (e.g.,
$\uc{b}{a}$). This is useful when \setlog is used as a programming language
because users can give values to input variables of basic types. Obviously,
tools for Z and B implement typecheckers similar to ours.
\citeN{DBLP:journals/corr/abs-2008-02933} shows snippets of the Prolog
implementation of a B typechecker.

Finally, we use the combination between type checking and constraint solving in
a different way compared to works in logic programming. For example,
\citeN{Drabent2002} or \citeN{Pietrzak2008} mentioned above, solve a system of
constraints (sometimes based on set-analysis) to find out whether or not the
program verifies some properties given by means of types. In \setlog, type
checking is used to rule out wrong programs or specifications, and constraint
solving is used as the mechanism of a sort of automated theorem prover. In
\setlog, programs, specifications and properties are set formulas as in Z and
B.

\section{\label{concl}Conclusions}

We have defined a type system and a typechecker for the CLP tool \setlog. Type
checking can be activated or not according to the users needs. The type system
is based on the type systems of formal notations such as B and Z. We have
proved that the operational semantics of \setlog is type safe by adapting the
type safety theorem of functional programming to the CLP context. Although the typechecker and the constraint solver are mostly  independent from each other, they work together when formulas include finite types in order to ensure soundness.
It is our
understanding that the results presented in this paper show that the
combination between a type system and CLP might be a practical approach to
software verification. The case studies based on the LGS problem and the
Android permission system provide empirical evidence about this claim.


As future work we want to study how the introduction of types may help with the
problem of computing the negation of formulas. For example, the negation of
$p(x) \text{ :- } x = (\ac{a},y)$ yields $\forall
y(x \neq (\ac{a},y))$, which lies outside the decision procedures implemented in \setlog. However, if types are brought into the game, the negation of $p(x)$ can be turned into a formula that \setlog can solve. In effect, $x = (\ac{a},y)$ is type-correct
iff $\ac{a}$ is part of a sum type, say $\Utype([\ac{a},\ac{b}(\Itype)])$, $y$
is of some type $\tau_y$, and $x$ is of type
$\Ptype(\Utype([\ac{a},\ac{b}(\Itype)]),\tau_y)$. Then, if $x$ is different
from $(\ac{a},y)$ for \emph{all} $y$, it can only be equal to $(\ac{b}(z),w)$
 for \emph{some} $z$ and \emph{some} $w$, because
otherwise it would lie outside its type. Therefore, $\forall y(x \neq
(\ac{a},y))$ can be turned into $\exists z,w(x = (\ac{b}(z),w))$. \setlog is able to deal with formulas including the latter, meaning that it would be able to deal with formulas including the negation of $p(x)$---something that, without types, could not be achieved. Then, we will work in
finding the class of formulas whose negation can be safely computed when type
information is available. This would be another area where the interplay between the typechecker and the constraint solver produces good results.

\def	\lover		{\mathbin{{\dres} \llap{+\hspace{-3pt}}\;}}
\newcommand{\Ftype}{\mathsf{fun}}

We think that the combination between type checking and constraint solving can
be improved through the introduction of subtypes. For example, if $0 \leq x$ is a type invariant and we have $x' = x
+ 3$ we need to prove $0 \leq x \land x' = x + 3 \implies 0 \leq x'$. However,
by introducing $\mathsf{nat}$ as a subtype of $\Itype$, typing $+:\mathsf{nat}
\cross \mathsf{nat} \fun \mathsf{nat}$ and declaring
$\Dec([x,x'],\mathsf{nat})$, it would be possible for the typechecker to 
automatically discharge the invariant by deciding
whether the program is type-correct or not. The greatest gains along this line
would be the introduction of a type for functions (say $\Ftype$) as a subtype of $\Rtype$ and
a library of function-based constraints that would typecheck when their
arguments are functions. This would make unnecessary the introduction of a
number of invariance lemmas ensuring that a given binary relation is indeed a
function.
For instance, the update or override operator available in B ($\lover$) and Z ($\oplus$) is known to be closed for functions. Hence, a type declaration such as $\lover: \Ftype(T,U) \cross \Ftype(T,U) \fun \Ftype(T,U)$ can be added to the system. In this case, many proof obligations that today are passed to a prover could be easily discharged by the typechecker, instead.
\bigskip

\noindent\textit{Competing interests: The authors declare none}

\bigskip

\bibliographystyle{acmtrans}
\bibliography{/home/mcristia/escritos/biblio}

\end{document}